\spnewtheorem{reduction}{Reduction}{\bfseries}{}
\begin{document}

\title{Towards optimal kernel for connected vertex cover in planar graphs\thanks{L. Kowalik and M. Pilipczuk are supported by the National Science Centre (grants N206 567140 and N206 355636). Additionally M. Pilipczuk is supported by the Foundation for Polish Science. K. Suchan is supported by Basal-CMM, Anillo ACT88 and Fondecyt 11090390 programs of Conicyt, Chile.}}
\titlerunning{Towards optimal kernel for connected vertex cover in planar graphs}

\author{
  \L{}ukasz Kowalik\inst{1}
  \and
  Marcin Pilipczuk\inst{1}
  \and
  Karol Suchan\inst{2,}\inst{3}
}
\authorrunning{Kowalik et al.}

\institute{Institute of Informatics, University of Warsaw, Poland\\
  \email{\{kowalik@,malcin@\}mimuw.edu.pl}
  \and
  Universidad Adolfo Ib\'{a}\~{n}ez, Chile\\
  \email{karol.suchan@uai.cl}
  \and
  WMS, AGH - University of Science and Technology, Poland.
}

\maketitle

\begin{abstract}
  We study the parameterized complexity of the connected version of the vertex cover problem, where the solution set has to induce a connected subgraph.
  Although this problem does not admit a polynomial kernel for general graphs (unless $\rm{NP}\subseteq \rm{coNP}/poly$), for planar graphs Guo and Niedermeier [ICALP'08] showed a kernel with at most $14k$ vertices, subsequently improved by Wang et al. [MFCS'11] to $4k$. 
  The constant $4$ here is so small that a natural question arises: could it be already an optimal value for this problem? In this paper we answer this quesion in negative: we show a $\frac{11}{3}k$-vertex kernel for  {\sc Connected Vertex Cover} in planar graphs. We believe that this result will motivate further study in search for an optimal kernel.
\end{abstract}

\section{Introduction}

Many NP-complete problems, while most likely cannot be fully solved efficiently, admit kernelization algorithms, i.e.\ efficient algorithms which replace input instances with an equivalent, but often much smaller ones. More precisely, a {\em kernelization algorithm} takes an instance $I$ of size $n$ and a parameter $k\in\mathbb{N}$, and after a time polynomial in $n$ it outputs an instance $I'$ (called a {\em kernel}) with a parameter $k'$ such that $I$ is a yes-instance iff $I'$ is a yes instance, $k'\le k$, and $|I'|\le f(k)$ for some function $f$ depending only on $k$. The most desired case is when the function $f$ is polynomial, or even linear (then we say that the problem admits a polynomial or linear kernel). In such a case, when the parameter $k$ is relatively small, the input instance, possibly very large, is ``reduced'' to a small one. Intuitively, kernelization aims at finding the core difficulty in the input instance. The output instance can be then processed in many ways, including approximation algorithms or heuristics. For small values of $k$ and small kernels, one can often even use an exact (exponential-time) algorithm.

A typical example of the above phenomenon is the well-known {\sc Vertex Cover} problem, which admits a kernel with at most $2k$ vertices~\cite{ckj:vc}, where $k$ is the size of an optimum vertex cover in the input instance. However, for some problems, reducing to a linear number of vertices seems a hard task, e.g.\ for the {\sc Feedback Vertex Set} problem, the best known result is the $4k^2$-vertex kernel of Thomass\'{e}~\cite{fvs:quadratic-kernel}. Even worse, there are many natural problems (examples include {\sc Dominating Set} or {\sc Steiner Tree}) for which it is proved that they do not admit a polynomial kernel, unless some widely believed complexity hypothesis fails ($FPT \neq W[2]$ in the first case and $\rm{NP}\varsubsetneq \rm{coNP}/poly$ in the second one). This motivates investigation of polynomial kernels in natural restrictions of graph classes. Note that it is of particular interest to guarantee the output instance belongs to the same class as the input instance.

A classic example is the $335k$-vertex kernel for the {\sc Dominating Set} problem in planar graphs due to Alber et al.~\cite{afn:planar-domset} (note that in planar graphs the number of edges is linear in the number of vertices, thus kernels with linear number of vertices are in fact linear kernels). Later, this work was substantially generalized (see e.g.~\cite{fomin:bidim-kernels}), and researchers obtained a number of linear kernels for planar graphs. Let us mention the $112k$-vertex kernel for  {\sc Feedback Vertex Set} by Bodlaender and Penninkx~\cite{bp:planar-fvs}, the linear kernel for {\sc Induced Matching} by Moser and Sikdar~\cite{ms:indmatch} or the $624k$-vertex kernel for the {\sc Maximum Triangle Packing} by Guo and Niedermeier~\cite{guon:planarkernels}.

Observe that the constants in the linear functions above are crucial: since we deal with NP-complete problems, in order to find an exact solution in the reduced instance, most likely we need exponential time (or at least superpolynomial, because for planar graphs $2^{O(\sqrt{k})}$-time algorithms are often possible), and these constants appear in the exponents.
Motivated by this, researchers seek for linear kernels with constants as small as possible. For example, by now there is known a $67k$-vertex kernel for {\sc Dominating Set}~\cite{cfkx:duality-and-vertex}, a $28k$-vertex kernel for  {\sc Induced Matching}~\cite{ekkw:indmatch} or a $75k$-vertex kernel for {\sc Maximum Triangle Packing}~\cite{mfcs}.

In this work, we study the {\sc Connected Vertex Cover} problem, a variant of the classical {\sc Vertex Cover}: we are given a planar graph $G=(V,E)$ and a parameter $k$, and we ask whether there exists a vertex cover $S$ (i.e.\ a set $S\subseteq V$ such that every edge of $G$ has an endpoint in $S$) of size $k$ which induces a connected subgraph of $G$. This problem is NP-complete also in planar graphs \cite{garey:planarcvc}, and, contrary to its simpler relative, it does not admit a polynomial kernel in arbitrary graphs~\cite{dom:colorsandids}. However, Guo and Niedermeier~\cite{guon:planarkernels} showed a $14k$-vertex kernel for planar graphs. Very recently, it was improved to $4k$ by Wang et al.~\cite{mfcs}. The constant $4$ here is already so small that a natural question arises: could it be the optimal value for this problem? In this paper we answer this question in negative: we show a $\frac{11}{3}k$-vertex kernel for  {\sc Connected Vertex Cover} in planar graphs.

Let us recall that in the analysis of the $4k$-vertex kernel by Wang et al. \cite{mfcs},
the vertices of the graph are partitioned into three parts: vertices of degree one, the solution, and the rest of the graph, and it is proven that, after applying a few reduction rules,
the sizes of these parts can be bounded by $k$, $k$ and $2k$, respectively.
We present (in Lemma \ref{lem:key}) a deeper analysis of these bounds and we show that
an instance where all the bounds are close to being tight is somewhat special.
This analysis is the main technical contribution of this paper.
We believe that this result will motivate further study in search for an optimal kernel.

\paragraph{Organization of the paper}
The paper is organized as follows. In Section~\ref{sec:alg} we present our kernelization algorithm along with a proof of its correctness. 
In Section~\ref{sec:analysis} we show that our algorithm outputs a kernel with the number of vertices bounded by $\frac{11}{3}k$.
Finally, in Section~\ref{sec:example} we describe an example which shows that our analysis is tight (and hence improving on the kernel size would require adding new reduction rules to the algorithm).

\paragraph{Terminology and notation}
We use standard fixed parameter complexity and graph theory terminology, see e.g.~\cite{downey-fellows:book,diestel}.
For brevity, we call a vertex of degree $d$ a {\em $d$-vertex} and if a vertex $v$ has a $d$-vertex $w$ as a neighbor we call $w$ a $d$-neighbor.
By $N_G(v)$ we denote the set of neighbors of $v$ and we omit the subscript when it is clear from the context.
For a graph $G$ and a subset of its vertices $S$, by $G[S]$ we denote the subgraph of $G$ induced by $S$.


\section{Algorithm}
\label{sec:alg}

In this section we present our kernelization algorithm. It works in three phases. In what follows, $(G_0,k_0)$ denotes the input instance.

\subsection{Phase 1}

Phase 1 is a typical kernelization algorithm. A set of rules is specified and in each rule the algorithm searches the graph for a certain configuration. If the configuration is found, the algorithm performs a modification of the graph (sometimes, also of the parameter $k$), typically decreasing the size of the graph. Each rule has to be {\em correct}, which means that the new graph is planar and the graph before application of the rule has a connected vertex cover of size $k$ if and only if the new graph has a connected vertex cover of size equal to the new value of $k$. We apply the rules in order, i.e.\ Rule $i$ can be applied only if, for every $j<i$, Rule $j$ does not apply.
The first three rules come from~\cite{mfcs}. 

\newcounter{savedenumi}
\begin{enumerate}[Rule 1. ]
\item \label{r:1-vtx} If a vertex $v$ has more than one 1-neighbors, then remove all these neighbors
except for one.
\item For a 2-vertex $v$ with $N(v) = \{u,w\}$ and $uw \in E(G)$, contract the edge $uw$ and decrease the parameter $k$ by one.\label{r:2->1edge}
\item For a 2-vertex $v$ with $N(v) = \{u,w\}$ and $uw \not\in E(G)$, if $v$ is not a cut-vertex, then
remove $v$ and add a new 1-neighbor to each of $u$ and $w$; otherwise, contract the edge $uv$
and decrease the parameter $k$ by one.
\label{r:2->1}
\item If there is an edge $uv$ such that both $u$ and $v$ have a 1-neighbor, then remove the 1-neighbor of $u$, contract $uv$ and decrease the parameter $k$ by one.\label{r:1-1}
\item \label{r:3-vtx} If there is a 3-vertex $v$ with $N(v)=\{x,y,z\}$ and such that $\deg(z)=1$, then remove $v$ and $z$, add an edge $xy$ if it was not present before, and decrease
the parameter $k$ by one. 
\item \label{r:K_2,3} If there are two 3-vertices $a$ and $b$ with a common neighborhood $N(a)=N(b)=\{x,v,y\}$, and such that removing any two vertices from $\{x,v,y\}$ makes the graph disconnected, then remove $a$ and add three $1$-vertices $x'$, $v'$ and $y'$, adjacent to $x$, $v$ and $y$, respectively.
\item \label{r:two-triangles} If there is a 3-vertex $a$ with $N(a)=\{x,v,y\}$ and such that $v$ is a 4-vertex with $N(v)=\{x,a,y,q\}$, where $q$ is a 1-vertex, then remove vertices
$a$, $v$ and $q$, add an edge $xy$ as well as two $1$-vertices $x'$ and $y'$, connected to $x$ and $y$ respectively, and decrease the parameter $k$ by one.
\label{r:last}
\setcounter{savedenumi}{\value{enumi}}
\end{enumerate}

\begin{figure}
\begin{center}
\includegraphics{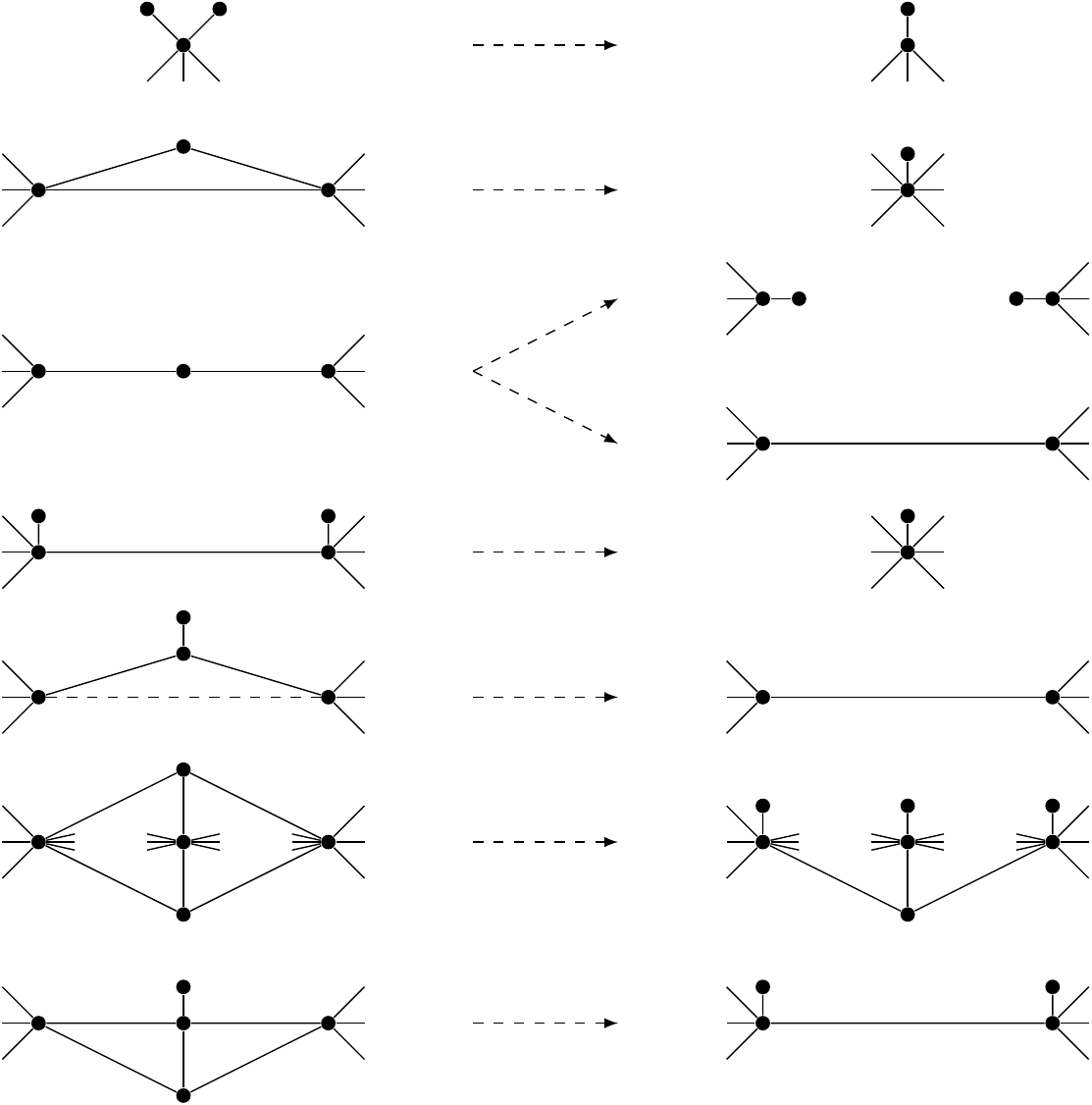}
\caption{Rules used in Phase 1.}
  \label{fig:rules}
  \end{center}
  \end{figure}

\begin{lemma}
 Rules 1-\ref{r:last} are correct. 
\end{lemma}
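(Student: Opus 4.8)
The plan is to treat each of the seven rules separately and, for each of them, verify the two things required by correctness: that the output graph is planar, and that $(G,k)$ has a connected vertex cover of size $k$ if and only if the produced instance $(G',k')$ has a connected vertex cover of size $k'$.

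For planarity, observe that vertex deletions, edge deletions and edge contractions all preserve planarity, and attaching a new degree-$1$ vertex to an existing vertex obviously does as well. Hence the only steps that need a separate argument are the additions of the edge $xy$ in Rules~5 and~7 (Rule~2 only contracts an already present edge). In both of these cases $x$ and $y$ are neighbours of a common vertex that is being deleted (the $3$-vertex $v$ in Rule~5, and the vertices $a$ and $v$ in Rule~7), so in any plane embedding of $G$ the vertices $x$ and $y$ lie on a common face of the graph obtained after the deletions, and the edge $xy$ can be routed inside that face.

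For the equivalences, the backbone of every argument is a short list of normalisation observations about connected vertex covers that I would state once and then reuse. First, if a vertex $u$ has a $1$-neighbour $z$ and $G$ has at least two edges (which, Rule~1 having been applied, we may always assume, the remaining case being trivial), then every connected vertex cover $S$ of $G$ contains $u$: otherwise the edge $uz$ forces $z\in S$, and then connectedness of $S$ together with $\deg(z)=1$ gives $S=\{z\}$, contradicting $|S|\ge 2$; moreover $S\setminus\{z\}$ is still a connected vertex cover, so we may assume no $1$-vertex is used. Second, for a $2$-vertex $v$ with $N(v)=\{u,w\}$ an exchange argument shows that some minimum connected vertex cover avoids $v$: when $uw\in E(G)$ by replacing $v$ with whichever of $u,w$ keeps $S$ connected, and when $uw\notin E(G)$ by distinguishing whether $v$ is needed as a cut vertex of $G[S]$ — which is precisely the cut-vertex dichotomy appearing in the statement of Rule~3. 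Third, in the $K_{2,3}$-configuration of Rule~6, since $a$ and $b$ are false twins of degree~$3$, any vertex cover satisfies $\{x,v,y\}\subseteq S$ or $\{a,b\}\subseteq S$; using that every two-element subset of $\{x,v,y\}$ is a cut set of $G$ one shows that one may always pass to a connected vertex cover of no larger size with $\{x,v,y\}\subseteq S$ and $a,b\notin S$.

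With these observations in hand, each rule is handled by exhibiting, from a normalised connected vertex cover $S$ of $G$, an explicit connected vertex cover of $G'$ of the prescribed size — obtained by deleting the $1$-vertices and the small-degree vertices touched by the rule, identifying the endpoints of the contracted edge, and, in the pendant-adding rules, adding the forced vertices — and, conversely, reconstructing a connected vertex cover of $G$ from one of $G'$, typically by re-inserting a single vertex to account for the drop in the parameter. In each direction the vertex-cover property is a routine local check of the few edges added or removed; the delicate point, and the one I expect to be the main obstacle, is connectivity. One must verify that contracting an edge or deleting a vertex does not split $G[S]$ (this is exactly where the ``$v$ is a cut-vertex'' hypotheses and the separator hypothesis of Rule~6 are invoked), and that in the pendant-adding rules the vertices forced into the new cover can be attached to the rest of the cover without any extra cost. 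Rules~6 and~7, where an entire $K_{2,3}$ or a pair of triangles sharing two vertices is replaced, will need the most careful connectivity bookkeeping, while the correctness of the first five rules follows fairly directly from the normalisation observations above.
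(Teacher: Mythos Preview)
Your plan is sound and follows essentially the same route as the paper: rule-by-rule forward and backward transformations of connected vertex covers, using that pendants force their neighbours into any solution, that $2$-vertices can be normalised away, and a twin swap for the $K_{2,3}$ configuration. The paper in fact defers Rules~1--3 to~\cite{mfcs} and does not spell out the planarity checks, so you are providing strictly more detail there; for Rules~5--7 your intended arguments coincide with the paper's.

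One small overclaim to fix when you write the details. In your third observation you assert that for Rule~6 one can always pass to a connected vertex cover of no larger size with $\{x,v,y\}\subseteq S$ \emph{and} $a,b\notin S$. The first inclusion is correct (and is exactly where the two-element-cut hypothesis is used), but after the twin swap you can in general only guarantee $a\notin S$: if $b$ is the unique vertex of $S$ joining some pair among $x,v,y$ inside $G[S]$, removing $b$ destroys connectedness, and nothing in the hypotheses rules this out. This does not affect correctness, since $G'$ is obtained from $G$ by deleting only $a$ (and adding pendants), so a normalised $S$ with $\{x,v,y\}\subseteq S$ and $a\notin S$ is already a connected vertex cover of $G'$ of the same size. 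Just state the normalisation as ``$\{x,v,y\}\subseteq S$ and $a\notin S$'' rather than ``$a,b\notin S$''.
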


\begin{proof}
The correctness of Rules \ref{r:1-vtx}-\ref{r:2->1} were proven in \cite{mfcs}.
Rule \ref{r:1-1} is obvious, as both $u$ and $v$ need to be included in any connected vertex cover of the graph.
In the proofs for Rules \ref{r:3-vtx}-\ref{r:last}, by $G$ and $G'$ we denote the graph respectively before and after the currently considered rule was applied.

\noindent\underline{Rule \ref{r:3-vtx}.} Let $S'$ be a connected vertex cover in $G'$. We claim that $S = S' \cup \{v\}$ is a connected vertex cover in $G$.
As $v \in S$, $S$ is a vertex cover of $G$. Moreover, as $xy \in E(G')$, either $x$ or $y$ belongs to $S'$, and if both $x$ and $y$ belong to $S'$,
then they remain connected in $G[S]$ via the vertex $v$. We infer that $G[S]$ is connected.

In the other direction, let $S$ be a minimum connected vertex cover of $G$. Clearly $v \in S$ and $z \notin S$.
As Rules \ref{r:1-vtx}-\ref{r:2->1} are not applicable, the vertices $x$ and $y$ have degrees at least $3$. Therefore, as $G[S]$ is connected, $x$ or $y$ belongs to $S$;
without loss of generality assume that $x \in S$. It follows that $S' = S \setminus \{v\}$ is a vertex cover of $G'$. Moreover, if $y \in S$, then $x$ and $y$
remain connected in $G[S']$ via the edge $xy$. Thus $G'[S']$ is connected and the proof of the correctness of Rule \ref{r:3-vtx} is finished.

\noindent\underline{Rule \ref{r:K_2,3}.} We first note that any connected vertex cover in $G'$ is also a connected vertex cover in $G$, since it needs to contain $x$, $v$, $y$
due to new vertices $x'$, $v'$, $y'$.

In the other direction, let $S$ be a minimum connected vertex cover in $G$. Note that $S$ needs to contain at least two vertices out of $\{x,v,y\}$, since removing $\{x,v,y\} \setminus S$ cannot
disconnect $G$. 
If $a \in S$ while $b \notin S$, then $(S \setminus \{a\}) \cup \{b\}$ is also a minimum connected vertex cover of $G$. Thus we can assume that if $a \in S$ then $b \in S$ as well.
We infer that if $\{x,v,y\} \subseteq S$, then $S \setminus \{a\}$ is a connected vertex cover in $G'$ too.
Otherwise, without loss of generality let $v \notin S$. As $S$ is a vertex cover, $a,b \in S$.
As $N_G(a) = N_G(b)$, we have that $G[S \setminus \{a\}]$ is connected. Thus $(S \setminus \{a\}) \cup \{v\}$ is a connected vertex cover of both $G$ and $G'$, and Rule \ref{r:K_2,3} is correct.

\noindent\underline{Rule \ref{r:two-triangles}.} Let $S'$ be a minimum connected vertex cover in $G'$. The vertices $x'$ and $y'$ ensure that $x,y \in S'$ and $x',y' \notin S'$.
Then clearly $S = S' \cup \{v\}$ is a connected vertex cover in $G$, as $x$ and $y$ remain connected in $G[S]$ via the vertex $v$.

In the other direction, let $S$ be a minimum connected vertex cover in $G$. Clearly $v \in S$ and $q \notin S$. Since Rule \ref{r:2->1} is not applicable,
the degrees of $x$ and $y$ are at least $3$. As $G[S]$ is connected, we infer that $x$ or $y$ belongs to $S$; without loss of generality we can assume that $x \in S$.
We claim that $S' = (S \setminus \{v,a\}) \cup \{y\}$ is a connected vertex cover of $G'$ of size at most $|S|-1$. If $y \in S$, the statement is obvious.
Otherwise, since $S$ is a vertex cover of $G$, we have that $a \in S$ and $|S'| = |S|-1$. This finishes the proof of the lemma.
\end{proof}

\subsection{Phase 2}

In what follows, the graph obtained after Phase 1 is denoted by $G_1$. Graph $G_1$, similarly as the reduced graph in~\cite{mfcs}, does not contain 2-vertices and every vertex has at most one 1-neighbor. The goal of Phase 2 is to decrease the number of vertices in the graph by replacing some pairs of 1-vertices by 2-vertices using the following rule (a kind of inverse of~\ref{r:2->1}):

\begin{enumerate}[Rule 1.]

\setcounter{enumi}{\value{savedenumi}}

\item If there are two vertices $u$ and $v$, both with 1-neigbors, say $x_u$ and $x_v$, and $u$ and $v$ are incident with the same face then identify $x_u$ and $x_v$.
\label{r:1->2}
\end{enumerate}

In the above rule we assume that we have a fixed plane embedding of $G_1$ (if the input graph is not given as a plane embedding, it can be found in linear time from the set of edges by an algorithm of Hopcroft and Tarjan~\cite{tarjan}). Note that Rule~\ref{r:1->2} preserves planarity. Observe also that since the graph before application of this rule is connected, then the in new graph the vertex which appears after identifying $x_u$ and $x_v$ is not a cut-vertex. Moreover, Rule \ref{r:1-1} guarantees that $uv \notin E(G_1)$.
It follows that the correctness of Rule~\ref{r:2->1} implies the corectness of Rule~\ref{r:1->2}.

In order to get our bound on the kernel we do not apply Rule~\ref{r:1->2} greedily, but we maximize the number of times it is applied. 
To this end, an auxiliary graph $G_M$ is built. Let $S_1$ be the set of vertices of graph $G_1$ that have a 1-neighbor. The vertex set of $G_M$ is equal to $S_1$. Two vertices $u$ and $v$ of $S_1$ are adjacent in $G_M$ if and only if $u$ and $v$ are incident to the same face in $G_1$. (Note that $G_M$ does not need to be planar.) Our algorithm finds a maximum matching $M^*_0$ in $G_M$ in polynomial time. Next we modify $M^*_0$ to get another matching $M^*$ of the same size. We start with $M^*=\emptyset$. Then, for each face $f$ of $G_1$ we find the set $M_f$ of all edges $uv$ of $M^*_0$ such that both $u$ and $v$ are incident to $f$. Let $v_1,\ldots,v_{2|M_f|}$ be the vertices of $V(M_f)$ in the clockwise order around $f$. We add the set $\{v_1v_2, \ldots, v_{2|M_f|-1}v_{2|M_f|}\}$ to $M^*$. It is clear that after applying this procedure to all the faces of $G_1$ we have $|M^*|=|M_0^*|$. Moreover, the graph $G_1 \cup M^*$ is planar, and we can extend the plane embedding of $G_1$ to a plane embedding of  $G_1 \cup M^*$. It follows that Rule~\ref{r:1->2} can be applied $|M^*|$ times according to the matching $M^*$. The time needed to perform Phase 2 is dominated by finding the matching $M^*_0$ which can be done in $O(\sqrt{|V(G_M)|}\cdot |E(G_M)|) = O(n^{2.5})$ using the Micali-Vazirani algorithm~\cite{micali-vazirani}.

\subsection{Phase 3}

Let $(G_2,k_2)$ be the instance obtained after Phase 2. 
In the next section we show that if $G_2$ contains a connected vertex cover of size $k_2$, then $|V(G_2)| \le \frac{11}{3} k_2$.
Together with the correctness of Rules 1-\ref{r:1->2} this justifies the correctness of the last step of our kernelization algorithm: if $|V(G_2)| > \frac{11}{3} k_2$ the algorithm reports that in the input graph $G_0$ does not contain a connected vertex cover of size $k_0$.

\section{Analysis}
\label{sec:analysis}

Let $S$ be a minimum connected vertex cover in $G_1$. Clearly, $S$ is a connected vertex cover in $G_2$, as $S$ does not contain any $1$-vertices. Moreover,
the correctness of Rule \ref{r:2->1} ensures that $S$ is also a minimum one.
The goal of this section is to show that $|V(G_2)| \le \frac{11}{3} |S|$.
However, most of the time we fill focus on the graph $G_1$.

Observe that we can assume that every vertex in $S$ has degree at least 3: this is not the case only in the trivial case when $G_1$ is a single edge.
Note that $V(G_1)\setminus S$ is an independent set. We denote it by $I$. 
The set $I$ is further partitioned into three subsets: $I_1$, $I_3$ and $I_{\ge 4}$ which contain vertices of $I$ of degree 1, 3 and at least 4 respectively.
Note that each neighbor of a vertex in $I_1$ belongs to $S$. We denote by $S_1$ the set vertices in $S$ which have a neigbor in $I_1$ and let $S_{\ge 3} = S \setminus S_1$. 

Roughly, we want to show that $S$ is a big part of $V(G_1)$. Following~\cite{mfcs}, we can bound $|I|$ as follows. Consider the bipartite planar graph $B$ which consists of the edges of $G_1$ between $S$ and $I\setminus I_1$. Then $3|I\setminus I_1| \le |E(B)| < 2 (|S|+|I\setminus I_1|)$, where the second inequality follows from the well-known fact that in a simple bipartite planar graph the number of edges is smaller than twice the number of vertices. This implies that $|I\setminus I_1| < 2|S|$. Since in $G_1$ every vertex has at most one 1-neighbor, so $|I_1|=|S_1|\le |S|$. Hence $|I|<3|S|$. However, with our additional rules, this inequality is not tight. There are three events which make $|I|$ even smaller than $3|S|$. Obviously, this happens when the matching $M^*$ is large. Second good event is when $S_{\ge 3}$ is large, because this means that $|S_1|$ is much smaller than $|S|$, so we can improve our bound on $|I_1|$. Finally, it is also good when the set $I_{\ge 4}$ is large, because then we get a better bound on $|I\setminus I_1|$. We will show that at least one of  these three situations happen in $G_1$. This is guarnteed by the following lemma.

\begin{lemma}
\label{lem:key}
 $|S_{\ge 3}| + |I_{\ge 4}| + |M^*| \ge |S|/3$.
\end{lemma}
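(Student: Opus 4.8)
The inequality $|S_{\ge 3}| + |I_{\ge 4}| + |M^*| \ge |S|/3$ should follow from a careful double-counting argument on the bipartite planar graph between $S$ and $I \setminus I_1$, combined with a structural consequence of the matching $M^*$ being maximum. The key observation is that if all three quantities on the left-hand side are small, then $G_1$ is forced into a very rigid shape which contradicts the exhaustive application of the Phase~1 rules, or forces the matching $M^*$ to be larger than assumed.

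**First step: refine the edge-counting bound.** I would set up $B$ as in the text: the bipartite planar subgraph of $G_1$ on $S \cup (I \setminus I_1)$, containing exactly the edges of $G_1$ between these two sides. Every vertex of $I_3$ has degree exactly $3$ in $B$ and every vertex of $I_{\ge 4}$ has degree at least $4$, so $|E(B)| \ge 3|I_3| + 4|I_{\ge 4}|$. Since $B$ is simple bipartite planar, $|E(B)| \le 2(|S| + |I_3| + |I_{\ge 4}|) - 4$. Combining, $|I_3| + 2|I_{\ge 4}| \le 2|S| - 4$, which already gives $|I \setminus I_1| = |I_3| + |I_{\ge 4}| \le 2|S| - |I_{\ge 4}| - 4$. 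So a large $I_{\ge 4}$ directly shrinks $|I\setminus I_1|$. The point is that we will want to \emph{charge} the vertices of $S_{\ge 3}$ and the edges of $M^*$ against the slack in this inequality.

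**Second step: exploit maximality of $M^*$ via face incidences.** This is where the real work is. Recall $S_1$ is the set of vertices with a $1$-neighbor, $|I_1| = |S_1|$, and $M^*$ is a maximum matching in the face-incidence graph $G_M$ on vertex set $S_1$. If $M^*$ is \emph{not} perfect on $S_1$, there is a vertex $u \in S_1$ unmatched by $M^*$; by König-type reasoning (or directly: $M^*$ maximum means no augmenting path, hence in particular no other unmatched vertex shares a face with $u$), every face incident to $u$ contains no other unmatched vertex of $S_1$ and in fact is "saturated". The plan is to show that each face of $G_1$ can be incident to at most one $M^*$-unmatched vertex of $S_1$ — this uses precisely the re-arrangement step that produced $M^*$ from $M^*_0$ by pairing consecutive vertices around each face. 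So the number of unmatched vertices of $S_1$ is at most the number of faces, which by Euler's formula is at most $|E(G_1)| - |V(G_1)| + 2$. I would then bound $|E(G_1)|$ in terms of $|S|$ and $|I|$ using planarity and the degree structure, and bound $|V(G_1)| = |S| + |I_1| + |I_3| + |I_{\ge 4}|$, turning "few unmatched vertices" into "$|M^*|$ large relative to $|S_1|$", hence $|I_1| = |S_1| \le 2|M^*| + (\text{small})$.

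**Putting it together; the main obstacle.** Once $|I_1| \le 2|M^*| + O(|S_{\ge 3}|)$ or similar is in hand, summing $|V(G_1)| = |S| + |I_1| + |I_3| + |I_{\ge 4}|$ and using $|S_1| = |S| - |S_{\ge 3}|$ should collapse everything to the target inequality after clearing denominators — this is the routine arithmetic I would not grind through here. The genuinely delicate part is the face-incidence argument for $M^*$: making precise that the consecutive-pairing construction forces at most one unmatched $S_1$-vertex per face, and then getting a \emph{tight enough} bound on the number of faces (a naive bound loses too much, since $|E(G_1)|$ can be close to $3|V(G_1)|$). I expect one must use that vertices of $I$ are independent and that $I_3$-vertices contribute only degree $3$, giving $|E(G_1)| \le 2|S| + |I_1| + 3|I_3|/\cdots$-type bounds, carefully enough that the $\frac{1}{3}$ constant comes out. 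If the face-counting is not sharp enough on its own, the fallback is to combine it with the edge-counting slack from the first step — the three "good events" are meant to be traded off against each other, not used in isolation, so the proof likely proceeds by assuming $|S_{\ge 3}| + |I_{\ge 4}| + |M^*| < |S|/3$ and deriving a contradiction with $|V(G_1)|$ (or with planarity of $G_1 \cup M^*$) rather than proving the inequality directly.
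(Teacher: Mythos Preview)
Your approach has a genuine gap. The face-counting step is correct as stated --- two $M^*$-unmatched vertices of $S_1$ on a common face would give an augmenting edge in $G_M$ --- but the resulting bound is far too weak: the number of faces of $G_1$ is driven by $|I_3|$, which can be close to $2|S|$, and then Euler's formula gives on the order of $3|S|$ faces. You would only conclude $|S_1| - 2|M^*| \lesssim 3|S|$, which is vacuous. No sharpening of the edge count on $G_1$ fixes this, because $I_3$-vertices genuinely create many small faces. Your first step (the refined count on the bipartite graph $B$) is fine, but it is used in the proof of Theorem~\ref{th:bound}, not of this lemma. More tellingly, your outline never invokes Rules~\ref{r:3-vtx}, \ref{r:K_2,3} or~\ref{r:two-triangles}; these rules were introduced precisely to make this lemma go through, so any correct proof must use them.

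The paper's argument is entirely different. It constructs an auxiliary \emph{planar} graph $W$ on vertex set $S \cup I_{\ge 4}$ together with three pendant leaves for each $v \in S_{\ge 3}$; the edges are the $G_1$-edges inside $S \cup I_{\ge 4}$, the pendant edges, and, around each face $f$ of $G_1$, a cycle through the $V(W)$-vertices incident to $f$. The heart of the proof is a local claim: every $v \in S_1$ has degree at least $3$ in $W$. This is exactly where Rules~\ref{r:3-vtx}, \ref{r:K_2,3} and~\ref{r:two-triangles} enter --- they eliminate the configurations of $I_3$-neighbours around $v$ that would otherwise leave $\deg_W(v) \le 2$. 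Since the $S_{\ge 3}$-vertices have their three leaves, every vertex of $S$ now has $W$-degree at least $3$, and an auxiliary lemma (Lemma~\ref{lem:matching}, a consequence of the Nishizeki--Baybars bound) yields a matching $M$ in $W$ with $|M| \ge |S|/3$. Each edge of $M$ either meets $S_{\ge 3} \cup I_{\ge 4}$ (at most $|S_{\ge 3}| + |I_{\ge 4}|$ such edges, as $M$ is a matching) or has both endpoints in $S_1$; two $S_1$-vertices adjacent in $W$ share a face of $G_1$ (Rule~\ref{r:1-1} rules out a direct $G_1$-edge between them), hence are adjacent in $G_M$, so there are at most $|M^*|$ edges of this second type. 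The inequality follows immediately.
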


In order to prove Lemma~\ref{lem:key} we need the following auxiliary result.

\begin{lemma}
 \label{lem:matching}
 In any simple planar graph which contains $n_{\ge 3}$ vertices of degree at least 3 there is a matching of cardinality at least $n_{\ge 3} / 3$.
\end{lemma}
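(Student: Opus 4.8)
The plan is to derive the inequality from the Tutte--Berge formula, using planarity only through the classical fact that a simple bipartite planar graph on $m\ge 3$ vertices has at most $2m-4$ edges. Write $G$ for the graph, $n=n_{\ge 3}$, and let $\nu$ be the size of a maximum matching of $G$. By the Tutte--Berge formula (see, e.g.,~\cite{diestel}) there is a set $T\subseteq V(G)$ with $2\nu = |V(G)| - \bigl(o(G-T)-|T|\bigr)$, where $o(H)$ denotes the number of connected components of $H$ with an odd number of vertices. I would classify the components of $G-T$ as \emph{even} components, \emph{large} odd components (those with at least three vertices), and \emph{trivial} odd components (single vertices). Writing $P$ for the total number of vertices in even components and $L=\sum_C(|C|-1)$ for the sum over the large odd components $C$, the Tutte--Berge identity rewrites as $2\nu = 2|T| + P + L$, since the trivial components contribute $0$ to this sum.

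Next I would bound $n_{\ge 3}$ by splitting the vertices of $G$ according to where they lie. At most $|T|$ of them are in $T$, and at most $P$ lie in even components. A large odd component $C$ contains at most $|C|$ vertices of degree at least $3$, and since $|C|$ is odd with $|C|\ge 3$ we have $|C|\le\tfrac32(|C|-1)$; hence the large odd components contribute at most $\tfrac32 L$ altogether. (This inequality is exactly where the constant $3$ enters.) Let $s$ be the number of trivial components $\{v\}$ with $\deg_G(v)\ge 3$; these account for the remaining contributors, so $n_{\ge 3}\le |T| + P + \tfrac32 L + s$.

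The only genuine use of planarity is the bound on $s$. If $\{v\}$ is a trivial component of $G-T$, then $v$ has no neighbour in $V(G)\setminus(T\cup\{v\})$, so all neighbours of $v$ lie in $T$; in particular each of the $s$ vertices counted has at least three neighbours in $T$, which forces $|T|\ge 3$ as soon as $s\ge 1$. Hence, when $s\ge 1$, the bipartite subgraph of $G$ between these $s$ vertices and $T$ is simple, planar, and has minimum degree at least $3$ on the $s$-side, so $3s\le 2(s+|T|)-4$ and therefore $s\le 2|T|-4$; in all cases $s\le 2|T|$. Substituting $n_{\ge 3}\le |T| + P + \tfrac32 L + s$ together with $2\nu = 2|T| + P + L$ into the target inequality $3\cdot 2\nu\ge 2n_{\ge 3}$, the terms involving $P$ and $L$ cancel, leaving $4|T| + P\ge 2s$, which holds since $s\le 2|T|$.

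The main obstacle, and essentially the only nontrivial point, is controlling the vertices of degree at least $3$ lying \emph{inside} $G-T$: a trivial odd component is handled by the bipartite planar edge bound (the single place where planarity is used), whereas a larger odd component of order $d\ge 3$ may contribute up to $d$ vertices to $n_{\ge 3}$ while contributing only $d-1$ to $2\nu$ — which is acceptable precisely because $d\le\tfrac32(d-1)$ for odd $d\ge 3$. Everything else amounts to bookkeeping around the Tutte--Berge identity.
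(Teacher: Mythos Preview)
Your proof is correct and takes a genuinely different route from the paper's. The paper argues by induction on the number of vertices of degree at most $2$: it removes or contracts low-degree vertices via a short case analysis until the minimum degree is $3$, and then invokes the Nishizeki--Baybars theorem that an $n$-vertex planar graph of minimum degree $3$ has a matching of size at least $\tfrac13(n+2)$. Your argument bypasses both the induction and the Nishizeki--Baybars black box: you apply the Tutte--Berge formula directly, classify the components of $G-T$, and use planarity in a single place (the bipartite edge bound $3s\le 2(s+|T|)-4$ to control the isolated high-degree vertices of $G-T$). This is more self-contained and makes the role of planarity completely transparent; in effect it reproves the relevant consequence of Nishizeki--Baybars along the way. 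The paper's approach, on the other hand, is perhaps easier to discover if one already knows the minimum-degree-$3$ matching bound, and its reduction steps are reusable for similar statements. One small remark: when you say ``the terms involving $P$ and $L$ cancel,'' the $L$-terms cancel exactly while a $P$ survives on the left; your stated residual inequality $4|T|+P\ge 2s$ is nonetheless correct and follows immediately from $s\le 2|T|$.
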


\begin{proof}
Let $G$ be an arbitrary planar graph. Let $n_{\le 2}$ denote the number of vertices of $G$ of degree at most 2 and $V_{\ge 3}(G)$ --- the set of vertices of degree at least $3$ in $G$.
We use induction on $n_{\le 2}$.
If $n_{\le 2}=0$, the lemma follows by a result of Nishizeki and Baybars~\cite{nishizeki} who proved that any $n$-vertex planar graph of minimum degree 3 contains a matching of size at least $\tfrac{1}{3}(n+2)$.
Now assume $n_{\le 2}>0$. Let $v$ be an arbitrary vertex of degree at most 2. There are three cases to consider.

Case 1: $\deg(v)=0$. Then we remove $v$ and we apply the induction hypothesis. 

Case 2: $\deg(v)=1$. Let $w$ be its only neighbor. If $\deg(w)\ne 3$ we can just remove $v$ and use the induction hypothesis. If $\deg(w)=3$ then let $G'$ be the graph obtained from $G$ by removing $v$ and $w$. $G'$ has at least $n_{\ge 3} - 3$ vertices of degree at least 3 ($V_{\ge 3}(G) \setminus V_{\ge 3}(G')$ may contain only $w$ and its two neighbours different than $v$), so by the induction hypothesis $G'$ has a matching $M_0$ of size at least $n_{\ge 3}/3 - 1$. Then $M_0\cup\{vw\}$ is the desired matching in $G$.

Case 3: $\deg(v)=2$. Let $N(v)=\{x,y\}$. If $xy \not \in E(G)$ then we obtain $G'$ from $G$ by removing $v$ and adding an edge $xy$.
Note that $G'$ is simple, planar and $V_{\ge 3}(G) = V_{\ge 3}(G')$.
By the induction hypothesis $G'$ has a matching $M_0$ of size at least $n_{\ge 3} / 3$. If $xy\not\in M_0$ then $M_0$ is the desired matching in $G$, otherwise we just use $M_0 \setminus \{xy\} \cup \{xv\}$.
Hence we are left with the case when $xy \in E(G)$. If $\deg(x) \neq 3$ and $\deg(y) \neq 3$ then we can just remove $v$ and use the matching from the induction hypothesis.
Hence, w.l.o.g.\ $\deg(x)=3$. Then let $G'$ be the graph obtained from $G$ by removing $v$ and $x$. $G'$ has at least $n_{\ge 3} - 3$ vertices of degree at least 3
($V_{\ge 3}(G) \setminus V_{\ge 3}(G')$ may contain only $x$, $y$ and the third neighbour of $x$ different than $y$ and $v$),
so by the induction hypothesis $G'$ has a matching $M_0$ of size at least $n_{\ge 3}/3 - 1$. Then $M_0\cup\{vx\}$ is the desired matching in $G$.
\qed
\end{proof}

\begin{proof}[of Lemma~\ref{lem:key}]
Let us consider an auxiliary graph $W$. Its vertex set consists of three types of vertices: 
elements of $S$, $I_{\ge 4}$ and additionally, for each $v\in S_{\ge 3}$ we add three vertices 
$v_1$, $v_2$ and $v_3$. The edge set can be contructed as follows. 
First, for every $v,w \in S \cup I_{\ge 4}$, we add $vw$ to $W$ whenever $vw\in E(G_1)$. Second, for each $v\in S_{\ge 3}$ we add three edges $vv_1$, $vv_2$ and $vv_3$. Finally, we consider faces of $G_1$, one by one. For each such face $f$ we do the following. Let $u_1, \ldots, u_{\ell}$ be the vertices of $V(W)$ incident to $f$, in clockwise order. Then, if $\ell>1$, we add edges $u_1u_2, u_2u_3, \ldots u_{\ell-1}u_{\ell},u_{\ell}u_1$. In this process we do not create double edges, i.e.\ if an edge is already present in the graph, we do not add another copy of it. It is clear that $W$ is a planar graph. Let us prove the following claim.

\begin{figure}
 \begin{center}
\includegraphics{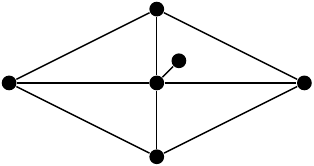}
\caption{A 6-vertex graph with minimum connected vertex cover of size 3.}
\label{fig:exception}
\end{center}
\end{figure}

{\bf Claim:} Every vertex $v\in S_1$ has degree at least 3 in graph $W$, unless $G_1$ is the graph from Fig.~\ref{fig:exception}.

By Rule~\ref{r:1-vtx} there is exactly one 1-neighbor of $v$ in $G_1$, let us call it $q$.
By Rule~\ref{r:3-vtx}, $\deg_{G_1}(v) \ge 4$, so we have $|N_{G_1}(v) \setminus \{q\}| \ge 3$.
Note that if $v$ is adjacent in $G_1$ with a vertex $x\in V(W)$, then $vx \in E(W)$.
Hence we can assume that $N_{G_1}(v) \setminus \{q\}$ contains at least one vertex outside $V(W)$, for otherwise the claim holds. 
Let $a$ denote such a neigbor.

Since $a \not \in I_1 \cup I_{\ge 4} \cup S$, we infer that $\deg_{G_1}(a)=3$. Let $x$ and $y$ be the two other neighbors of $a$ except from $v$.
Since $a \not \in S$, we know that $x,y \in S \subseteq V(W)$. By our construction $vx, vy \in E(W)$, so $\deg_W(v) \ge 2$.

Now assume that $(N_{G_1}(v) \setminus \{q\}) \setminus V(W) = \{a\}$. It follows that $|N_{G_1}(v) \cap V(W)| \ge \deg_{G_1}(v)-2\ge 2$. 
If $N_{G_1}(v) \cap V(W)$ contains a vertex $z\not\in\{x,y\}$, then $vz \in E(W)$ and the claim holds. 
It follows that $N_{G_1}(v) \cap V(W) = \{x,y\}$ and $N_{G_1}(v)=\{a,x,y,q\}$.
Hence the vertices $v, x, y, a, q$ induce the configuration from Rule~\ref{r:two-triangles}, a contradiction.

Finally assume that $|(N_{G_1}(v) \setminus \{q\}) \setminus V(W)| \ge 2$. Consider an arbitary vertex $b \in (N_{G_1}(v) \setminus \{q\}) \setminus V(W)$, $b\ne a$. As shown above, $\deg_{G_1}(b)=3$ and if $x'$ and $y'$ denote the two other neighbors of $b$ except from $v$, then $x',y' \in V(W)$ and $vx', vy' \in E(W)$. It follows that $\{x,y\} = \{x',y'\}$, for otherwise the claim holds. 
It implies that when $|(N_{G_1}(v) \setminus \{q\}) \setminus V(W)| \ge 3$, then $G_2$ contains $K_{3,3}$ as a subgraph, a contradiction with planarity.
Hence, $\{q, a, b\} \subseteq N_{G_1}(v) \subseteq \{q, a, b, x, y\}$.
Note that a removal of $v$ disconnects $q$ from the rest of the graph $G_1$.
Thus, as Rule \ref{r:K_2,3} is not applicable for vertices $v, x, y, a, b$, $G_1 \setminus \{x,y\}$ is connected.
However, $N_{G_1}(\{a, b, v, q\}) = \{x,y\}$ and $y$ is not of degree $2$, as otherwise Rule \ref{r:2->1} would be applicable.
We infer that $G_1$ is isomorphic to the graph from Figure \ref{fig:exception}.
This finishes the proof of the claim.

Now we return to the proof of Lemma~\ref{lem:key}. 
If $G_1$ is the graph from Fig.~\ref{fig:exception} we see that $|S_{\ge 3}| = 2$ and $|S|=3$ so the lemma holds.
Hence by the above claim we can assume that $\deg_W(v)\ge 3$ for any $v\in S_1$.
Since also each vertex $v$ in $S_{\ge 3}$ has at least three neighbors $v_1, v_2, v_3$ in $W$, we conclude that for any $v\in S$ we have $\deg_W(v) \ge 3$. By Lemma~\ref{lem:matching}, graph $W$ contains a matching $M$ of size
\begin{equation}
\label{eq:M-lower}
 |M| \ge |S|/3.
\end{equation}
The edges of $M$ are of two kinds:
\begin{itemize}
 \item edges incident with a vertex in $S_{\ge 3} \cup I_{\ge 4}$ (there are at most $|S_{\ge 3}| + |I_{\ge 4}|$ of such edges),
 \item edges with both endpoints in $S_1$ (there are at most $|M^*|$ of such edges).
\end{itemize}
Hence, 
\begin{equation}
\label{eq:M-upper}
 |M| \le |S_{\ge 3}| + |I_{\ge 4}| + |M^*|.
\end{equation}
By combining~\eqref{eq:M-lower} with~\eqref{eq:M-upper} we get the claim of the lemma.
\qed
\end{proof}

\begin{theorem}
 \label{th:bound}
Let $G_2$ be the graph obtained after the kernelization algorithm and let $S$ be any minimum connected vertex cover of $G_2$. 
Then $|V(G_2)| \le \frac{11}{3} |S|$.
\end{theorem}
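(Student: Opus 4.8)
The plan is to assemble the bound from three ingredients: the counting argument already sketched in the text, the improvement coming from Lemma~\ref{lem:key}, and the observation that $G_2$ differs from $G_1$ only by the identifications performed in Phase~2. First I would work entirely in $G_1$ with $S$ a minimum connected vertex cover, recalling from the discussion preceding Lemma~\ref{lem:key} that $I = V(G_1)\setminus S$ is independent, partitioned into $I_1$, $I_3$, $I_{\ge 4}$, and that $|I_1| = |S_1|$ by Rule~\ref{r:1-vtx}. The basic planar bipartite count on the graph $B$ of edges between $S$ and $I\setminus I_1$ gives $3|I_3| + 4|I_{\ge 4}| \le |E(B)| < 2(|S| + |I\setminus I_1|)$, and since $|E(B)| \ge 3|I_3| + 4|I_{\ge 4}|$ I would solve for $|I_3|$ and $|I_{\ge 4}|$ to get $|I\setminus I_1| = |I_3| + |I_{\ge 4}| < 2|S| - |I_{\ge 4}|$, i.e.\ the size of $I\setminus I_1$ drops below $2|S|$ by (roughly) the size of $I_{\ge 4}$.

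Next I would combine this with the $S_1/S_{\ge 3}$ split: since $|I_1| = |S_1| = |S| - |S_{\ge 3}|$, we obtain
\begin{equation*}
|V(G_1)| = |S| + |I_1| + |I\setminus I_1| < |S| + (|S| - |S_{\ge 3}|) + (2|S| - |I_{\ge 4}|) = 4|S| - |S_{\ge 3}| - |I_{\ge 4}|.
\end{equation*}
Then I would account for Phase~2: each application of Rule~\ref{r:1->2} along the matching $M^*$ identifies two distinct $1$-vertices into one, so $|V(G_2)| = |V(G_1)| - |M^*|$ (the parameter and $S$ are unchanged, as noted at the start of Section~\ref{sec:analysis}, and $|S|$ is the same minimum connected vertex cover in both graphs). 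This yields
\begin{equation*}
|V(G_2)| < 4|S| - (|S_{\ge 3}| + |I_{\ge 4}| + |M^*|).
\end{equation*}
Finally, plugging in Lemma~\ref{lem:key}, which gives $|S_{\ge 3}| + |I_{\ge 4}| + |M^*| \ge |S|/3$, produces $|V(G_2)| < 4|S| - |S|/3 = \frac{11}{3}|S|$, and I would round the strict inequality to the claimed $|V(G_2)| \le \frac{11}{3}|S|$ (handling separately the degenerate cases where some vertex of $S$ has degree below $3$, i.e.\ when $G_1$ is a single edge, and the exceptional graph of Figure~\ref{fig:exception}, for which the bound can be checked directly).

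The one subtlety I would be careful about is the bookkeeping of the bipartite count: the graph $B$ is bipartite and planar and may have parallel edges suppressed, so I must make sure I am using a simple bipartite planar graph so that $|E(B)| < 2|V(B)|$ applies, and I should double-check that the coefficient on $|I_{\ge 4}|$ is handled correctly (degree $\ge 4$ gives at least $4$ edges per such vertex, which is what creates the slack). A second point worth stating cleanly is why $|V(G_2)| = |V(G_1)| - |M^*|$ exactly rather than merely at least: this needs the fact, established in Phase~2, that the chosen $M^*$ is realizable as $|M^*|$ successive valid applications of Rule~\ref{r:1->2} on distinct pairs of $1$-vertices incident to common faces, so no identification is ``wasted.'' Neither of these is deep, but getting the inequalities to line up so that the $1/3$ from Lemma~\ref{lem:key} lands exactly on $11/3$ is the part where an off-by-a-constant slip would be easiest, so that is where I would spend the most care.
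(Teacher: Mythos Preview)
Your proposal is correct and follows essentially the same approach as the paper: the same bipartite planar count yielding $|I_3|+|I_{\ge 4}| < 2|S| - |I_{\ge 4}|$, the identity $|I_1|=|S_1|=|S|-|S_{\ge 3}|$, the observation $|V(G_2)|=|V(G_1)|-|M^*|$, and then Lemma~\ref{lem:key}. The only cosmetic difference is that the paper rewrites $|S_1|=\tfrac{2}{3}|S_1|+\tfrac{1}{3}|S_1|$ and feeds Lemma~\ref{lem:key} into the $\tfrac{1}{3}$ part before summing, whereas you collect everything into $|V(G_2)| < 4|S| - (|S_{\ge 3}|+|I_{\ge 4}|+|M^*|)$ and apply the lemma at the end; the two computations are algebraically equivalent.
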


\begin{proof}
Consider the bipartite planar graph $B$ which consists of the edges of $G_1$ between $S$ and $I_3\cup I_{\ge 4}$. 
Then $3|I_3| + 4|I_{\ge 4}| \le |E(B)| < 2 (|S|+|I_3|+|I_{\ge 4}|)$. This implies that 
\begin{equation}
\label{eq:i3+i4}
|I_3|+|I_{\ge 4}| < 2|S| - |I_{\ge 4}|. 
\end{equation}
By Lemma~\ref{lem:key}, $|S_1| \le 2 |S_{\ge 3}| + 3 |I_{\ge 4}| + 3|M^*|$. Using this we get
\begin{equation}
\label{eq:i1}
|I_1|=|S_1|=\frac{2}{3}|S_1|+\frac{1}{3}|S_1|\le \frac{2}{3}|S_1|+\frac{2}{3}|S_{\ge 3}| + |I_{\ge 4}| + |M^*|. 
\end{equation}
Now we are ready to bound the number of vertices in $G_2$:
\begin{eqnarray*}
 |V(G_2)| & = & |S| + |I_1| - |M^*| + |I_3| + |I_{\ge 4}| \\
          & \le^{\text{\eqref{eq:i3+i4},\eqref{eq:i1}}} & |S| + \frac{2}{3}|S_1|+\frac{2}{3}|S_{\ge 3}|+2|S| \\
          & = & \frac{11}{3}|S|.
\end{eqnarray*}
 \qed
\end{proof}

\section{An example with tight analysis}
\label{sec:example}

In this section we show an example of a planar graph where the analysis from the previous section is tight.
That is, we construct a graph $G$ with the following properties:
no reduction of Phase 1 is applicable, $G$ admits a connected vertex cover
of size roughly $|V(G)|/4$, and Rule \ref{r:1->2} may be used at most $|V(G)|/12$ times.

Consider a gadget graph $H$ depicted on Figure \ref{fig:example}. For any integer $\ell \geq 3$, the graph $G_\ell$ is constructed by taking $\ell$ copies of $H$
and connecting them in the following manner:
\begin{enumerate}
\item In all copies of $H$, all vertices $s$ are identified into a single vertex; similarly, all vertices $t$ are identified into a single vertex.
\item The vertex $v$ from the $i$-th copy of $H$ is identified with the vertex $u$ from the $(i+1)$-th copy of $H$ and the vertex $v$ from the last copy of $H$
is identified with the vertex $u$ from the first copy of $H$; moreover, the $1$-neighbours of the aforementioned pairs vertices are also identified.
\item Any multiple edges, resulting in the above operations, are removed.
\end{enumerate}

\begin{figure}
 \begin{center}
\includegraphics{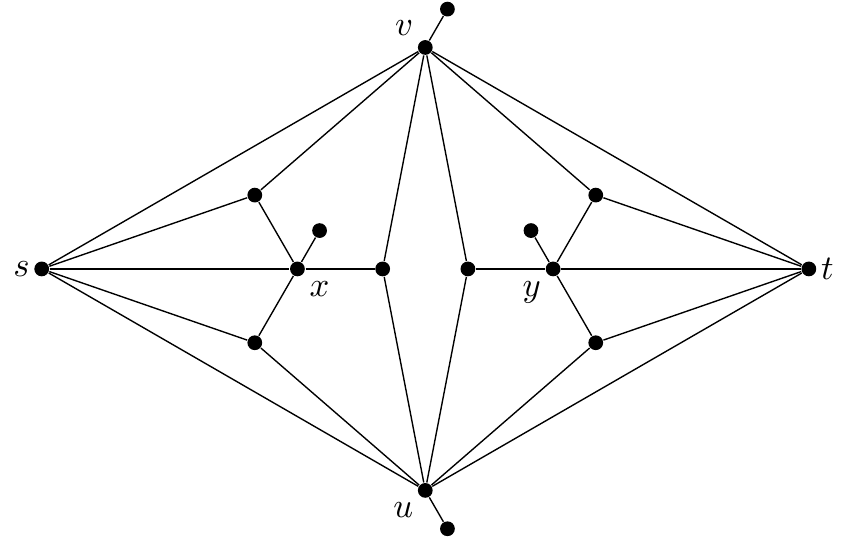}
\caption{A gadget graph $H$ used in the construction of an example with tight analysis.}
\label{fig:example}
\end{center}
\end{figure}

It is easy to see that the graph $G_\ell$ is planar and a direct check ensures us that Rules 1-$\ref{r:last}$ are not applicable to $G_\ell$.
Now note that the set $S$ consisting of vertices $s$, $t$ and all vertices $x$, $y$, $u$ and $v$ is a minimum connected vertex cover
of $G_\ell$. Moreover, $|V(G_\ell)| = 12\ell+2$ and $|S| = 3\ell+2$. Let us analyze sets $S_1$,
$S_{\geq 3}$, $I_1$, $I_3$ and $I_{\geq 4}$ (defined as in the previous section) in the graph $G$.
\begin{enumerate}
\item $S_{\geq 3} = \{s,t\}$, thus $|I_1| = |S_1| = 3\ell$, i.e., almost every vertex in $S$ has a $1$-neighbour;
\item $I_{\geq 4} = \emptyset$ and $|I_3| = 6\ell$, i.e., $|I_3| + |I_{\geq 4}| = 2|S| - 4$;
\end{enumerate}

Finally, let us analyze how many times Rule \ref{r:1->2} can be applied to $G_\ell$. Note that no pair of vertices $x$ and $y$ lie on the same face of the graph $G_\ell$,
thus any edge in the matching $M^*$ (constructed in Phase 2) needs to have an endpoint in a vertex $u$ or $v$. There are $\ell$ such vertices, thus $|M^*| \leq \ell$
(in fact it is easy to see that $|M^*| = \ell$, as we can match $u$ to $x$ in every copy of $H$). We conclude that
$$|S_{\geq 3}| + |I_{\geq 4}| + |M^*| = 2 + 0 + \ell = \frac{|S|+4}{3},$$
and the bound in Lemma \ref{lem:key} is tight up to an additive constant.

\bibliographystyle{splncs03}
\bibliography{planar-cvc-kernel}

\end{document}